\newcommand{\beq}{\begin{equation}}
\newcommand{\eeq}{\end{equation}}
\newcommand{\bea}{\begin{eqnarray}}
\newcommand{\eea}{\end{eqnarray}}
\newcommand{\bc}{\begin{cases}}
\newcommand{\ec}{\end{cases}}
\newcommand{\nn}{\nonumber}
\newcommand{\noi}{\noindent}
\newtheorem{definition}{Definition}
\newtheorem{proposition}{Proposition}
\newtheorem{theorem}{Theorem}
\newtheorem{lemma}{Lemma}
\theoremstyle{definition}
\newtheorem{remark}{\textbf{Remark}}
\begin{document}

\title[Multivariate Group Entropies]{Multivariate Group Entropies, \\ Super-exponentially Growing Complex Systems \\ and Functional Equations}

\author{Piergiulio Tempesta}
\address{Instituto de Ciencias Matem\'aticas, C/ Nicol\'as Cabrera, No 13--15, 28049 Madrid, Spain\\ and Departamento de F\'{\i}sica Te\'{o}rica, Facultad de Ciencias F\'{\i}sicas, Universidad
Complutense de Madrid, 28040 -- Madrid, Spain}


\email{piergiulio.tempesta@icmat.es, ptempest@ucm.es}
\date{November 18, 2020}
\maketitle

\begin{abstract}
We define the class of multivariate group entropies as a novel set of information - theoretical measures, which extends significantly the family of group entropies. We propose new examples related to the ``super-exponential'' universality class of complex systems; in particular, we introduce a general entropy, representing a suitable information measure for this class. We also show that the group-theoretical structure  associated with our multivariate entropies can be used to define a large family of exactly solvable discrete dynamical models. The natural mathematical framework allowing us to formulate this correspondence is offered by the theory of formal groups and rings.
\end{abstract}

\textit{Keywords: Group Entropies, Information Theory, Dynamical Systems}

\tableofcontents



\maketitle

\section{Introduction}

The study of generalized entropies in information theory and, in particular, in information geometry, has been actively pursued in the last few decades (see, e.g. \cite{Amari2016book}, \cite{AN2000book} \cite{Khinchin}, \cite{Shannon}, \cite{Shannon2}). 

The combination of group theory with the concept of generalized entropy has led to the notion of \textit{group entropy},  first introduced in \cite{PT2011PRE} and further investigated in  \cite{ST2016PRE}, \cite{PT2016AOP}, \cite{PT2016PRA}, \cite{ET2017JSTAT}, \cite{JT2018ENT}, \cite{RRT2019PRA}, \cite{TJ2020SCIREP}, \cite{CMT2019prep}, \cite{GBP2018prep}. Basically,  a group entropy is a function $S$ 
defined on a space of probability distributions, which possesses several interesting mathematical properties: it satisfies the first three Shannon-Khinchin (SK) axioms (continuity, maximum entropy principle, expansibility) and the \textit{composability axiom}. This axiom requires that given two statistically independent systems, $A$ and $B$, the entropy of the compound system $A\cup B$  must be described by the relation $S(A\cup B)= \Phi(S(A), S(B))$, where $\Phi(x,y)$ is a commutative (formal) group law (see \cite{JT2018ENT} for a recent review).

In \cite{JT2018ENT} and \cite{TJ2020SCIREP}, universality classes of complex systems have been studied in the context of information theory. These classes are defined in terms of a specific state space growth function $\mathcal{W}=\mathcal{W}(N)$, describing the growth of the number of micro-states available as a function of  the number $N$ of identical particles (or constituents) of a given system of the class. Under mild hypotheses, for each class, there exists a specific group entropy, which is extensive over the class and constructed in a purely axiomatic way from the Shannon-Khinchin axioms. In many respects, group entropies represent a versatile tool for generating a wide set of information measures, potentially useful in the theory of complex systems, biology, social sciences, etc.

The main purpose of this note is to further extend this ideas by formalizing and analyzing a new class of group entropies that we shall call \textit{multivariate group entropies}, since they are functions of several variables depending on a probability distribution. This new class has been first considered in \cite{RRT2019PRA}, where  some concrete examples of it have been proposed. We wish to clarify the properties of the multivariate family; in particular, we define a realization in terms of a simple set of functionals: the \textit{multivariate $Z$-entropies}.  They represent a natural generalization of the notion of $Z$-entropy introduced in \cite{PT2016PRA} (which refers to the standard case of \textit{univariate} group entropies).

The theory of group entropies is primarily a \textit{mathematical} one: we are essentially constructing a family of functions on a probability space, which share with R\'enyi's entropy many useful properties.  This allows us to give a natural interpretation of our entropies in the context of information geometry (as in \cite{RRT2019PRA}).


Besides, in this article, we shall construct explicitly new examples of both univariate and multivariate group entropies. They are designed in such a way that they are extensive over the universality class of systems whose state space possesses a \textit{super-exponential growth} in the number $N$ of its constituents.
The \textit{super-exponential class} was first explored in \cite{JPPT2018JPA} within the framework of group entropies. In particular, a new entropy was proposed, extensive over the class characterized by a state space growth rate $\mathcal{W}(N)\sim N^{N}$. We remind the reader that the case of a \textit{sub-exponential} growth rate of the kind $\mathcal{W}(N)\sim N^{a}$ has been largely explored in the context of non-extensive statistical mechanics  \cite{Tsallis1}, \cite{Tsallisbook2009}. A different approach for super-exponential systems, based on scaling-law analysis, has been developed in \cite{KHT2018}.

More precisely, we shall introduce a new family of group entropies parameterized by an ``interpolation function'' that can be easily fine-tuned to cover different universality classes, ranging from the exponential class to the super-exponential one (including the cases of both sub-factorial and super-factorial growth of the state space).

Finally, we establish a connection between the theory of group entropies and that of functional equations. A group entropy inherits in an obvious manner a functional equation, expressed by its own group law. We further investigate the set of functional equations related to group entropies by means of the algebraic structure of \textit{formal rings}, recently introduced in \cite{CT2019prep}. A formal ring is essentially a formal group endowed with a second composition law, which can be naturally realized as a deformed product, compatible with the generalized sum expressed by the group law.

Thus, starting from the formal ring structure associated with a group entropy, we define a set of functional equations,  whose solutions can be explicitly determined.

In this conceptual framework, we propose an application of the present approach to discrete dynamical systems. Indeed, by discretizing over a regular lattice the functional equations related with the formal ring structure of a group entropy, we obtain classes of discrete dynamical systems that, once again, possess exact solutions (here, we do not focus on the initial value problem) and in some cases related sequences of integer numbers.

Although the purpose of this work is essentially of a mathematical nature, we mention that from an applicative point of view, we aim to construct new information measures, potentially useful in the detection of complexity in different scenarios. Indeed, new applications can be envisaged, ranging from statistical inference theory to the study of biological models. An interesting perspective, for instance, is the study of complexity of the human brain. In this regard, one of the most relevant approaches is Tononi's integrated information \cite{TES1998TCS}, \cite{TJ2020SCIREP}, \cite{TSE1994PNAS} which is formulated mathematically in terms of sums of suitable conditional entropies. Also, in \cite{CH2014FHN}, the notion of ``entropic brain'' has been introduced and an entropic approach to consciousness has been developed (for a review, see  \cite{CH2018} and references therein). A different point of view has been suggested in \cite{TJ2020SCIREP}, with the idea of measuring certain features of the brain activity by using  group-theoretical entropic measures different from the standard Shannon information measure.
Another recent perspective, potentially relevant for the theory of complex systems, has been proposed in \cite{ADT2020}. The notion of \textit{permutation entropy}, crucial in data analysis, has been generalized by means of a specific group entropy which stays finite for random processes, when the number of allowed permutations grows super-exponentially with their length. This growth is typical of discrete-time dynamical processes with observational or dynamical noise.
Further work along these lines is in progress. 

The paper is organized as follows. In section \ref{S2}, the axiomatic formulation of the notion of group entropies is presented. In section  \ref{S3}, the theory of formal groups and rings is briefly reviewed.  In section \ref{S4},  the new class of multivariate $Z$-entropies is formally defined and some new examples are discussed in section \ref{S5}. In particular, in section \ref{S6}, a new, general family of super-exponential entropies is introduced. A study of functional equations and discrete dynamical systems naturally associated with group entropies is proposed in the final section \ref{S7}.


\section{On the axiomatic formulation of group entropies} \label{S2}

In order to introduce the notion of group entropy, we first remind the content of the  Shannon-Khinchin (SK) axioms \cite{Shannon}, \cite{Shannon2}, \cite{Khinchin}.  Let $S(p)$ be a function on a set $\mathcal{P}$ of probability distributions $\{p_i\}_{i=1}^{W}$,  with $W> 1$\footnote{We shall identify $W$ with the integer part of $\mathcal{W}(N)$.}, $p_i \geq 0$, $\sum_i p_i=1$. The first three SK axioms essentially amount to the following properties: \\
\noi (SK1) $S(p)$ is continuous with respect to all variables $p_1,\ldots,p_W$. \\
\noi (SK2) $S(p)$ takes its maximum value over the uniform distribution. \\
\noi (SK3) $S(p)$ is expansible: adding an event of zero probability does not affect the value of $S(p)$.

These axioms represent a minimal set of ``non-negotiable'' requirements that the function $S(p)$ should satisfy necessarily in order to be meaningful, both from a physical and information-theoretical point of view.
The fourth axiom, requiring specifically additivity on conditional distributions, leads to Boltzmann's entropy \cite{Khinchin}. Instead, we replace the additivity axiom by a more general statement.

\subsection{Composability axiom} An entropy is said to be \textit{composable} if there exists a sufficiently regular function $\Phi(x,y)$ such that, given two \textit{statistically independent} systems $A$ and $B$,
\beq \label{eq:comp}
S(A\cup B)=\Phi(S(A), S(B)) \ ,
\eeq
where the two systems are allowed to be defined on \textit{any arbitrary probability distribution} of $\mathcal{P}$.
Hereafter, the Boltzmann's constant is assumed to be equal to 1 (and dimensionless).

The relation \eqref{eq:comp} has been introduced in \cite{Tsallisbook2009}. However, we prefer to further specialize this definition (as in \cite{PT2011PRE}, \cite{ST2016PRE}, \cite{PT2016AOP}, \cite{PT2016PRA}, \cite{ET2017JSTAT}, \cite{JT2018ENT}, \cite{RRT2019PRA}, \cite{TJ2020SCIREP}). Precisely,  we shall also require the following properties, that jointly with eq. (1) define the composability axiom: \\
\noi (C1) Symmetry: $\Phi(x,y)=\Phi(y,x)$. \\
\noi (C2) Associativity: $\Phi(x,\Phi(y,z))=\Phi(\Phi(x,y),z)$. \\
\noi (C3) Null-Composability: $\Phi(x,0)=x$. \\

Observe that, indeed, the requirements $(C1)$-$(C3)$ are crucial ones: they impose the independence of the composition process with respect to the order of $A$ and $B$ and the possibility of composing three independent systems in an arbitrary way; besides, they ensure that, when composing a system with another one having zero entropy, the total entropy remains unchanged. In our opinion, these properties are also fundamental: indeed, no thermodynamic or information-theoretical applications would be easily conceivable without them. From an algebraic point of view, a formal power series $\Phi(x,y)$ satisfying the requirements $(C1)$-$(C3)$ defines a \textit{formal group law}.
\\
Consequently, infinitely many choices allowed by relation \eqref{eq:comp}, like $\Phi(x,y)= \sin(x+y)$ or $\Phi(x,y)=x+y+ x^2 y$, or even the simple one $\Phi(x,y)= xy$  are discarded. In this respect, the theory of formal groups \cite{Boch}, \cite{Haze}, \cite{Serre} offers a natural language in order to formulate the theory of generalized entropies in a consistent way.
\\
We remind that another independent and interesting approach, which defines the \textit{pseudo-additivity class} of entropies, has been formulated in \cite{IS2014PHYSA}.
Besides, a different approach to generalized entropies, based on the theory of \textit{statistical inference}, has been proposed in \cite{JK2019PRL}.
\\
First, we shall revise briefly some of the main notions of the theory.
\subsection{Group entropies}



\begin{definition}\label{def1}
A group entropy is a function $S: \mathcal{P}\to \mathbb{R}^{+} \cup \{0 \}$ which satisfies the  Shannon-Khinchin axioms (SK1)-(SK3) and the composability axiom.
\end{definition}

In order to determine a specific group entropy suitable for a certain class of complex systems, we can require a further, natural property. Let $N$ denote the number of identical particles, or constituents of a system; assuming that the state of the system is described by the uniform distribution, we impose that by increasing $N$  the group entropy considered satisfies the following condition:
\beq \label{ext}
\lim_{N\to\infty} \frac{S(N)}{N}=c
\eeq
where $c$ (which might depend on thermodynamic variables) by default is assumed to be a positive quantity, as for instance, in Ref. \cite{TJ2020SCIREP}.

The latter property amounts to require that in the case of maximal indistinguishability  (i.e., in the most ``disordered'' state), asymptotically $S$ grows  linearly as a function of $N$. Clearly, given a state space growth rate $\mathcal{W}=\mathcal{W}(N)$, this requirement selects specific group entropies among the infinitely many allowed by Definition \ref{def1}.


In the subsequent considerations, we shall distinguish two classes of functions: the univariate group entropies, represented by the $Z$-entropies first discussed in \cite{PT2016PRA}, and the multivariate ones.
In  \cite{RRT2019PRA}, the following family of entropy functions has been introduced:
\beq \label{MGE}
S(p):= F(S_{1}(p), \ldots, S_{n}(p)) \ ,
\eeq
where $\{S_{1}(p), \ldots, S_{n}(p)\}$ are all group entropies and $F:\mathbb{R}^{n}\to \mathbb{R}$ is a suitable function. In particular, the conditions ensuring that the entropy \eqref{MGE} is still a group entropy have been
clarified. Besides, a general technique  has been proposed, allowing us to define a new group entropy as the result of a \textit{multivariate composition process} of other group entropies, all of them sharing the same formal group law $\Phi(x,y)$. This reminds us of the typical procedure of Lie group theory, when it is applied to solve partial differential equations: one can generate new exact solutions of an equation from known ones \cite{Olver}. This technique, in turn, shows the power and the  versatility  of the group-theoretical approach for the study of generalized entropies. The previous ideas naturally lead us to define in an abstract way \textit{multivariate group entropies}, namely, group entropies expressed in terms of suitable multivariate functions. This is the main purpose of this work.


In particular, it will be shown that there exists a simple realization of the multivariate class, which will be defined and analyzed in this article: the \textit{multivariate $Z$-entropies}.

In order to make the presentation self-contained, I will first review some basic aspects of the theory of formal groups and rings that provide us with an elegant algebraic language for the formulation of the theory of generalized entropies.
\section{Formal groups and rings} \label{S3}

The mathematical form of the composability axiom leads us naturally to the theory of formal groups  \cite{Boch}, which has found many interesting applications, ranging from  algebraic topology to the theory of elliptic curves, arithmetic number theory, and combinatorics (see also \cite{PT2007CR}, \cite{PT2010ASN}, \cite{PT2015TRAN} for further applications).  Classical reviews are the works   \cite{BMN}, \cite{Nov},   \cite{Serre} and the monograph \cite{Haze} that we shall follow closely for notation.
\par
We will start with an abstract algebraic setting. Let $R$ be a commutative ring  with identity element, and $R\llbracket x_{1},\text{ }%
x_{2},..\rrbracket $ be the ring of formal power series in the variables $x_{1}$, $x_{2}$,
... with coefficients in $R$.

\begin{definition} \cite{Boch}
A commutative one--dimensional formal group law
over $R$ is a formal power series $\Phi \left( x,y\right) \in R\llbracket
x,y\rrbracket $ such that
\begin{equation*}
1)\qquad \Phi \left( x,0\right) =\Phi \left( 0,x\right) =x,
\end{equation*}%
\begin{equation*}
2)\qquad \Phi \left( \Phi \left( x,y\right) ,z\right) =\Phi \left( x,\Phi
\left( y,z\right) \right) \text{.}
\end{equation*}
When $\Phi \left( x,y\right) =\Phi \left( y,x\right) $, the formal group law is
said to be commutative.
\end{definition}
The existence of an inverse formal series $\varphi
\left( x\right) $ $\in R\llbracket x\rrbracket $ such that $\Phi \left( x,\varphi
\left( x\right) \right) =0$ follows from the previous definition.
%
Let $G\left( t\right)$ be the formal series
\beq
G\left( t\right) =\sum_{k=0}^{\infty} a_k \frac{t^{k+1}}{k+1} \label{I.2}.
\eeq
Each coefficient $b_j$ ($j\in \mathbb{N}$) of the inverse series $G^{-1}(s)$ can be explicitly computed. We have  $a_{0}=1, a_{1}=-b_1, a_2= \frac{3}{2} b_1^2 -b_2,\ldots$.
Thus, we can define a formal group law via the formal power series \cite{Haze}
\[
\Phi \left( s_{1},s_{2}\right) =G\left( G^{-1}\left(
s_{1}\right) +G^{-1}\left( s_{2}\right) \right).
\]



Besides, as was proved by Lazard, for any commutative one-dimensional formal group law $\Psi(x,y)$ over a torsionless ring $R$, there exists a unique power series $\psi(x)$ with coefficients in $ R\otimes \mathbb{Q}$,  such that
\[
\psi(x)= x+ O(x^2) \quad
\]\text{and}
\[\quad \Psi(x,y)= \psi\left(\psi^{-1}(x)+\psi^{-1}(y)\right)\in (R\otimes \mathbb{Q}) \llbracket x,y \rrbracket \ .
\]
Standard examples are the additive group law
\beq
\Phi(x,y)=x+y \label{additive}
\eeq
and the multiplicative group law
\beq
\Phi(x,y)= x+y+\sigma x y \ . \label{multiplicative}
\eeq
Let us also review the notion of formal ring, recently introduced in~\cite{CT2019prep}.
\begin{definition}\label{def:formalring}
A formal ring is a triple $\mathcal{A}=(R,\Phi,\Psi)$ where
$\Phi,\Psi\in R\llbracket x,y\rrbracket$ are formal power series such that
\begin{enumerate}
\item $\Phi$ is a commutative formal group law,
\item $\Psi$ satisfies the relations  \begin{align*}
  \Psi(\Psi(x,y),z)&=\Psi(x,\Psi(y,z)) \ ,\\
   \Psi(x,\Phi(y,z))&=\Phi(\Psi(x,y),\Psi(x,z)) \ ,\\
    \Psi(\Phi(x,y),z)&=\Phi(\Psi(x,z),\Psi(y,z)) \ .
 \end{align*}
\end{enumerate}
The formal ring will be said to be commutative if
$\Psi(x,y)=\Psi(y,x)$.
\end{definition}
As has been proved in \cite{CT2019prep}, a one-dimensional realization of the previous definition is given by
\begin{eqnarray} \label{eq:FRE1}
\Phi \left( s_{1},s_{2}\right) &=& G\left( G^{-1}\left(
s_{1}\right) +G^{-1}\left( s_{2}\right) \right)  \\ \label{eq:FRE2}
\Psi \left( s_{1},s_{2}\right) &=& G\left( G^{-1}\left(
s_{1}\right) \cdot G^{-1}\left( s_{2}\right) \right)
\end{eqnarray}
for a suitable $G(t)\in R\llbracket t\rrbracket$. Besides, an $n$-dimensional generalization of these constructions has also been proposed in \cite{CT2019prep}.
\\
A simple example of a one-dimensional formal ring structure is provided by the couple $\{\Phi(x,y), \Psi(x,y)\}$ where $\Phi(x,y)$ is given by eq. \eqref{multiplicative}  jointly with this new, formal expression for  the ``product'':
\beq \label{Tproduct}
\Psi(x,y)= \frac{\exp\big((1/\sigma)\ln(1+\sigma x)\ln(1+\sigma
    y)\big)-1}\sigma \ .
\eeq
\begin{remark}
In the forthcoming applications to group entropies, we will realize the equations \eqref{eq:FRE1}, \eqref{eq:FRE2} in terms of standard real-valued functions (this, a priori, may require suitable constraints: for instance, $x, y\geq0$ and $\sigma>0$ in eq. \eqref{Tproduct}).
\end{remark}
\section{The multivariate group entropies} \label{S4}
We shall present now the main construction of this article. To this aim, we remind that the notion of \textit{group logarithm} \cite{PT2016PRA} provides us with a group-theoretical deformation of the standard logarithm. It can be defined with different regularity properties.
For the purposes of this paper, it is sufficient to think of a group logarithm as a function $\ln_{\chi}(x):=\chi(\ln x)$, where $\chi\in C^{1}(\mathbb{R}_{\geq 0})$ is a strictly increasing function,
 taking positive values over $\mathbb{R}^{+}$, vanishing at zero. A natural choice is to assume, in addition, the form $\chi(t)=t+O(t^2)$ as $t\to 0$, so that we have $\Phi(x,y)= x+y+ \ldots$.
Let us introduce a set of real positive \textit{entropic parameters} $\{\alpha_1, \ldots, \alpha_n \}$. A further, typical restriction  is to assume that $\ln_{\chi}$ is a strictly concave function and to define
the parameters $\alpha_i$ in the range $(0, 1)$. More generally, if one deals with strictly Schur-concave functionals (which is sufficient for many purposes), this restriction can be dropped.

\begin{definition}
A multivariate $Z$-entropy (MZE) is a function $Z: \mathcal{P}\to \mathbb{R}^{+} \cup \{0 \}$ of the form
\beq \label{MZE}
Z_{\chi,\alpha_1,\ldots,\alpha_n}(p):= 
\ln_{\chi}\bigg(\bigg(\sum_{i_{1}=1}^{W}p_{i_{1}}^{\alpha_{1}}\bigg)^{\frac{1}{1-\alpha_{1}}}\bigg(\sum_{i_{2}=1}^{W}p_{i_{2}}^{\alpha_{2}}\bigg)^{\frac{1}{1-\alpha_{2}}}\cdot \ldots\cdot \bigg(\sum_{i_{n}=1}^{W}p_{i_{n}}^{\alpha_{n}}\bigg)^{\frac{1}{1-\alpha_{n}}}\bigg ) \ ,
\eeq
where  $n\in\mathbb{N}\backslash \{0\}$.
\end{definition}
\noi The univariate $Z$-entropies \cite{PT2016PRA} (for short, $Z$-entropies) have the form
\beq
Z_{G,\alpha}(p)= \frac{\ln_{G}(\sum_{i=1}^{W}p_{i}^{\alpha})}{1-\alpha} \ .
\eeq
\noi 
Thus, for $n=1$, the multivariate class is in a direct relation with the $Z$-class: each entropy of the first class   gives an entropy of the second one with a suitable $G(t)$ obtained by means of the identification
$
\chi(t)= \frac{G\big((1-\alpha)t\big)}{1-\alpha} \ .
$

It is straightforward to prove the following
\begin{proposition}
The MZEs satisfy the first three Shannon-Khinchin axioms.
\end{proposition}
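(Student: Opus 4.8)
The plan is to reduce all three axioms to elementary properties of the quantity sitting inside $\ln_{\chi}$. Write
\[
  \Omega(p) := \prod_{k=1}^{n} W_{\alpha_k}(p), \qquad
  W_{\alpha_k}(p) := \Bigl(\sum_{i=1}^{W} p_i^{\alpha_k}\Bigr)^{\frac{1}{1-\alpha_k}},
\]
so that $Z_{\chi,\alpha_1,\dots,\alpha_n}(p)=\ln_{\chi}(\Omega(p))=\chi(\ln\Omega(p))$. Since $\chi\in C^{1}(\mathbb{R}_{\ge 0})$ is strictly increasing and $\ln$ is continuous and strictly increasing on $\mathbb{R}^{+}$, the composite $\ln_{\chi}$ is continuous and strictly increasing; hence (SK1)--(SK3) for $Z$ will each follow from the corresponding statement for $\Omega$, once we know that $\Omega$ takes values in the domain of $\ln_{\chi}$ on the whole simplex. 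In fact each factor satisfies $W_{\alpha_k}(p)\ge 1$: if $\alpha_k\in(0,1)$ then $p_i^{\alpha_k}\ge p_i$ gives $\sum_i p_i^{\alpha_k}\ge 1$ and the exponent $1/(1-\alpha_k)$ is positive, while if $\alpha_k>1$ then $\sum_i p_i^{\alpha_k}\le 1$ and the exponent is negative; either way $W_{\alpha_k}(p)\ge 1$, so $\Omega(p)\ge 1$ and $\ln\Omega(p)\ge 0$ lies in $\mathbb{R}_{\ge 0}$.

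For (SK1), each map $p_i\mapsto p_i^{\alpha_k}$ is continuous on $[0,1]$ because $\alpha_k>0$ (so $0^{\alpha_k}=0$); hence $p\mapsto\sum_i p_i^{\alpha_k}$ is continuous and, as just noted, strictly positive on the simplex, so raising it to the fixed real power $1/(1-\alpha_k)$ is continuous; the finite product $\Omega$ is then continuous and positive, and $Z=\chi\circ\ln\circ\,\Omega$ is continuous in all variables. For (SK3), appending an event of zero probability merely adds the term $0^{\alpha_k}=0$ to each power sum, hence leaves every $W_{\alpha_k}(p)$, and therefore $\Omega(p)$ and $Z(p)$, unchanged; this is immediate.

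The substantive step is (SK2). Because $\ln_{\chi}$ is strictly increasing and the factors $W_{\alpha_k}$ are positive, it suffices to show that each $W_{\alpha_k}$ is maximized at the uniform distribution $p_i\equiv 1/W$. If $\alpha_k\in(0,1)$, then $t\mapsto t^{\alpha_k}$ is strictly concave, so the symmetric functional $p\mapsto\sum_i p_i^{\alpha_k}$ is Schur-concave and attains its maximum $W^{1-\alpha_k}$ at the uniform distribution; composing with the increasing map $x\mapsto x^{1/(1-\alpha_k)}$ shows $W_{\alpha_k}$ is maximal there, with value $W$. If instead $\alpha_k>1$, the same functional is Schur-convex, hence \emph{minimal} at the uniform distribution with value $W^{1-\alpha_k}$, but now $1/(1-\alpha_k)<0$, so $x\mapsto x^{1/(1-\alpha_k)}$ is decreasing and $W_{\alpha_k}$ is again maximal at the uniform distribution, with value $W$. (Equivalently, $W_{\alpha_k}(p)=\exp\bigl((1-\alpha_k)^{-1}\ln\sum_i p_i^{\alpha_k}\bigr)$ is a monotone reparametrization of the R\'enyi entropy $R_{\alpha_k}$, which is classically maximal at the uniform distribution.) Therefore $\Omega(p)\le W^{\,n}=\Omega(\mathrm{uniform})$, and applying $\ln_{\chi}$ yields $Z(p)\le\chi(n\ln W)=Z(\mathrm{uniform})$, which is (SK2).

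The only real obstacle is mild bookkeeping inside (SK2): one must keep track of whether the entropic parameters are restricted to $(0,1)$, where genuine convexity/concavity of $t\mapsto t^{\alpha_k}$ is available, or are allowed to range over the broader Schur-concave regime alluded to before the definition, and in each case confirm the positivity of $\Omega$ on the closed simplex so that the successive reductions through $x\mapsto x^{1/(1-\alpha_k)}$ and through $\ln_{\chi}$ are legitimate. Continuity and expansibility are, by contrast, automatic consequences of $\alpha_k>0$.
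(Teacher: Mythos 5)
Your proof is correct: the paper states this proposition without proof (declaring it straightforward), and your reduction—monotonicity of $\ln_{\chi}$ plus the fact that each factor $\bigl(\sum_i p_i^{\alpha_k}\bigr)^{1/(1-\alpha_k)}$ is continuous, unaffected by zero-probability events, and maximized at the uniform distribution (the R\'enyi-type argument, handled in both regimes $\alpha_k\in(0,1)$ and $\alpha_k>1$)—is exactly the standard argument intended. The check that $\Omega(p)\ge 1$, so that $\ln\Omega(p)$ lies in the domain $\mathbb{R}_{\ge 0}$ of $\chi$, is a worthwhile detail the paper leaves implicit.
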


\noi The main result of this section concerns the composability property of the MZE-class.
\begin{theorem}
The multivariate $Z$-entropies \eqref{MZE} satisfy the composability axiom, with composition law given by $\Phi(x,y)=\chi\big(\chi^{-1}(x)+\chi^{-1}(y)\big)$.
\end{theorem}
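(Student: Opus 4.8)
The plan is to reduce the claim to two elementary facts: the multiplicativity of the power sums $\sum_i p_i^{\alpha}$ under statistical independence (equivalently, the additivity of the R\'enyi entropy) and the algebraic identity $\ln_{\chi}(ab)=\chi(\ln a+\ln b)$ built into the group logarithm. First I would fix statistically independent systems $A$ and $B$ with distributions $\{p_i\}$ and $\{q_j\}$, so that $A\cup B$ carries the product distribution $\{p_iq_j\}$, and observe that for each entropic parameter $\alpha_k$ one has $\sum_{i,j}(p_iq_j)^{\alpha_k}=\big(\sum_i p_i^{\alpha_k}\big)\big(\sum_j q_j^{\alpha_k}\big)$. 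Raising to the power $1/(1-\alpha_k)$ and multiplying over $k=1,\ldots,n$, the argument of $\ln_{\chi}$ in \eqref{MZE} evaluated on $A\cup B$ factors as $\mathcal{N}(A)\,\mathcal{N}(B)$, where $\mathcal{N}(A):=\prod_{k=1}^{n}\big(\sum_i p_i^{\alpha_k}\big)^{1/(1-\alpha_k)}$ and similarly for $B$. In R\'enyi language, $\ln\mathcal{N}(A)=\sum_{k=1}^{n}R_{\alpha_k}(p)$ with $R_{\alpha}(p)=\tfrac{1}{1-\alpha}\ln\sum_i p_i^{\alpha}$, so this quantity is additive over $A\cup B$.

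Next I would use the structure $\ln_{\chi}(x)=\chi(\ln x)$. By the factorization above, $Z_{\chi,\alpha_1,\ldots,\alpha_n}(A\cup B)=\chi\big(\ln\mathcal{N}(A)+\ln\mathcal{N}(B)\big)$, whereas $Z_{\chi,\alpha_1,\ldots,\alpha_n}(A)=\chi\big(\ln\mathcal{N}(A)\big)$ and likewise for $B$. Since $\chi$ is strictly increasing it has a well-defined inverse, hence $\ln\mathcal{N}(A)=\chi^{-1}\big(Z_{\chi,\alpha_1,\ldots,\alpha_n}(A)\big)$, and substituting yields $Z(A\cup B)=\chi\big(\chi^{-1}(Z(A))+\chi^{-1}(Z(B))\big)=\Phi(Z(A),Z(B))$. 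Because the factorization of the power sums is an identity valid for \emph{all} $p,q$, this holds on arbitrary distributions of $\mathcal{P}$, as the composability axiom demands.

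It then remains to check that $\Phi(x,y)=\chi\big(\chi^{-1}(x)+\chi^{-1}(y)\big)$ satisfies (C1)--(C3) and is a formal group law. Symmetry (C1) is immediate from the commutativity of $+$. Associativity (C2) follows since $\Phi\big(\Phi(x,y),z\big)=\chi\big(\chi^{-1}(x)+\chi^{-1}(y)+\chi^{-1}(z)\big)$ is manifestly symmetric in $x,y,z$. Null-composability (C3) holds because $\chi$ vanishes at zero, so $\chi^{-1}(0)=0$ and $\Phi(x,0)=\chi(\chi^{-1}(x))=x$; consistently, the deterministic distribution $p_1=1$ gives $\mathcal{N}=1$ and hence $Z=0$. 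Finally, the normalization $\chi(t)=t+O(t^2)$ (equivalently $\chi^{-1}(s)=s+O(s^2)$) forces $\Phi(x,y)=x+y+\cdots$, so $\Phi$ is a genuine commutative formal group law.

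I expect the only real subtlety --- not so much an obstacle as a point requiring care --- to be the bookkeeping of the domain of $\chi^{-1}$: one must ensure that $\ln\mathcal{N}(A)$ always lies in $\mathbb{R}_{\ge0}$, where $\chi$ is defined and invertible. This is automatic, since each factor of $\mathcal{N}(A)$ equals $e^{R_{\alpha_k}(p)}$ with $R_{\alpha_k}(p)\ge0$, whence $\mathcal{N}(A)\ge1$ and $\ln\mathcal{N}(A)\ge0$ for every admissible choice of parameters --- covering both the case $\alpha_k\in(0,1)$ and the broader Schur-concave regime mentioned before the definition.
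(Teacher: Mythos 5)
Your argument is correct and follows essentially the same route as the paper's proof: factor the power sums over the product distribution, use $\ln_{\chi}(ab)=\chi(\ln a+\ln b)$, and insert $\chi\circ\chi^{-1}$ to identify each summand with $\chi^{-1}(Z(\cdot))$. Your additional verification of (C1)--(C3) and the remark that $\ln\mathcal{N}\ge 0$ (so $\chi^{-1}$ is applied on its domain) are sensible refinements that the paper leaves implicit, but they do not change the substance of the argument.
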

\begin{proof}
Let $p_{ij}^{A\times B}=p_{i}^{A}p_{j}^{B}$ be the joint probability distribution associated with the statistically independent systems $A$ and $B$. Then, we have that the entropies \eqref{MZE} satisfy the relation
\begin{eqnarray}
\nonumber && Z_{\chi,\alpha_1,\ldots,\alpha_n}(p^{A\times B}) = \nonumber \ln_{\chi}\bigg(\prod_{k=1}^{n}\bigg( \sum_{i_{k},j_{k}}p_{i_{k}j_{k}}^{\alpha_{k}}\bigg)^{\frac{1}{1-\alpha_{k}}} \bigg) \\ \nonumber &=& \ln_{\chi}\bigg( \prod_{k=1}^{n}\bigg(\sum_{i_{k}}p_{i_{k}}^{\alpha_{k}}\bigg)^{\frac{1}{1-\alpha_{k}}}\prod_{k=1}^{n}\bigg(\sum_{j_{k}}p_{j_{k}}^{\alpha_{k}}\bigg)^{\frac{1}{1-\alpha_{k}}}\bigg) \\
\nonumber &=& \chi\bigg(\ln \bigg(\prod_{k=1}^{n}\bigg(\sum_{i_{k}}p_{i_{k}}^{\alpha_{k}}\bigg)^{\frac{1}{1-\alpha_{k}}}\bigg)+ \ln \bigg(\prod_{k=1}^{n}\bigg(\sum_{j_{k}}p_{j_{k}}^{\alpha_{k}}\bigg)^{\frac{1}{1-\alpha_{k}}}\bigg) \bigg)= \\
\nonumber &=& \chi\bigg(\chi^{-1}\bigg(\chi\bigg(\ln \bigg(\prod_{k=1}^{n}\bigg(\sum_{i_{k}}p_{i_{k}}^{\alpha_{k}}\bigg)^{\frac{1}{1-\alpha_{k}}}\bigg)\bigg)\bigg)+\chi^{-1}\bigg( \chi\bigg( \ln \bigg(\prod_{k=1}^{n}\bigg(\sum_{j_{k}}p_{j_{k}}^{\alpha_{k}}\bigg)^{\frac{1}{1-\alpha_{k}}} \bigg)\bigg)  \bigg)\bigg)=
\\ \nonumber &=& \chi\bigg(\chi^{-1}\bigg( Z_{\chi,\alpha_1,\ldots,\alpha_n}(p^{A})\bigg)+ \chi^{-1}\bigg(Z_{\chi,\alpha_1,\ldots,\alpha_n}(p^{B})\bigg)\bigg)=
\\ \nonumber &=& \Phi\big(Z_{\chi,\alpha_1,\ldots,\alpha_n}(p^{A}), Z_{\chi,\alpha_1,\ldots,\alpha_n}(p^{B})\big) \ .
\\
\end{eqnarray}
\end{proof}

According to the previous discussion, and under the same hypotheses, the multivariate $Z$-entropies represent a general class of group entropies.

The physical and information-theoretical meaning of the function $\chi$ can be grasped by following the approach introduced in \cite{JT2018ENT} and \cite{TJ2020SCIREP}. Precisely,
let us assume that the state space growth function $\mathcal{W}\in C^{1}(\mathbb{R}^{+})$ associated with a universality class of complex systems is strictly increasing and takes positive values over $\mathbb{R}^{+}$. Thus, we can construct a related multivariate group entropy satisfying the extensivity requirement, by means of the formula


\beq\label{eq:MZErepr}
Z_{\chi,\alpha_1,\ldots,\alpha_n}(p)=\lambda \left(\mathcal{W}^{-1}\left(\bigg(\sum_{i_{1}=1}^{W}p_{i_{1}}^{\alpha_1}\bigg)^{\frac{1}{1-\alpha_1}}\bigg(\sum_{i_{2}=1}^{W}p_{i_{2}}^{\alpha_2}\bigg)^{\frac{1}{1-\alpha_2}}\cdot\ldots\cdot \bigg(\sum_{i_{n}=1}^{W}p_{i_{n}}^{\alpha_n}\bigg)^{\frac{1}{1-\alpha_n}}\right)- \mathcal{W}^{-1}(1) \right)\ ,
\eeq
where $\lambda\in\mathbb{R}^{+}$, $\alpha_1,\ldots,\alpha_n >0$. If we assume
\beq\label{eq:assump}
(\mathcal{W}^{-1})'(1)\neq 0 \ , 
\eeq
then a natural choice is $\lambda = \frac{1}{(\mathcal{W}^{-1})'(1)}$.
The form \eqref{eq:MZErepr} will be taken into account in the forthcoming considerations.

The associated composition law  is given by
\beq
\phi(x,y)=\lambda \left\{\mathcal{W}^{-1}\Big[
\mathcal{W}\left(\frac{x}{\lambda}+\mathcal{W}^{-1}(1)\right) \mathcal{W}\left(\frac{y}{\lambda}+\mathcal{W}^{-1}(1)\right)
\Big] -\mathcal{W}^{-1}(1)\right\} \ .
\label{phi_W}
\eeq
This formula, for the univariate case only, has been derived in \cite{JT2018ENT} and rigorously proved in \cite{TJ2020SCIREP}. Besides, one can show that it is still valid in the multivariate case.

\section{New examples of multivariate $Z$-entropies}\label{S5}
To our knowledge, the first examples of multivariate functionals have been proposed in \cite{RRT2019PRA}. In this section, we shall present new cases of multivariate entropies. 

\subsection{A linear combination of R\'enyi's entropies}
Let $R_{\alpha}(p)=\frac{1}{1-\alpha}\ln (\sum_{i=1}^{W} p_{i}^{\alpha})$ denote the R\'enyi entropy \cite{Renyi1960}. The simplest multivariate example is of course provided by the linear combination
\beq
R_{\alpha_1,\ldots, \alpha_n}(p)= \lambda_1 R_{\alpha_1}(p)+\ldots +\lambda_n R_{\alpha_n}(p)\ ,
\eeq
where $\lambda_1,\ldots,\lambda_n\in\mathbb{R}^{+}$, $n\in\mathbb{N}\backslash \{0\}$.
The composition law is obviously the additive one.

\subsection{An entropy for a stretched-exponential-type growth rate}
Let us consider the universality class of systems whose state space growth rate is
\beq
\mathcal{W}(N)= \exp\big({\gamma N^{\beta}}\big), \qquad \beta>1, \quad \gamma\in\mathbb{R}^{+}\ .
\eeq
We introduce the entropy
\begin{equation}
Z_{ \gamma,\beta, \alpha_1,\ldots,\alpha_n}(p):=\left(\frac{\ln\big(\sum_{i_1=1}^{W} p_{i_1}^{\alpha_1}\big)}{\gamma(1-\alpha_1)}+\ldots +\frac{\ln\big(\sum_{i_n=1}^{W} p_{i_n}^{\alpha_n}\big)}{\gamma(1-\alpha_n)}\right)^{1/\beta} \ .
\end{equation}
The related composition law is given by
\beq
\Phi(x,y)=\big(x^{\beta}+y^{\beta}\big)^{\frac{1}{\beta}} \ ,
\eeq
which was considered in \cite{PT2016AOP} in the trace-form case \footnote{Note that here we do not require the expansion condition for $\chi(t)$ around zero. Clearly, this condition can be easily restored by properly shifting $\mathcal{W}(N)$, which does not affect its asymptotic behaviour.}. Its univariate version
\beq
Z_{\gamma,\beta,\alpha}(p):=\left(\frac{\ln\big(\sum_{i=1}^{W} p_{i}^{\alpha}\big)}{\gamma(1-\alpha)}\right)^{1/\beta}
\eeq
is by construction a composable entropy. Its functional form reminds the trace-form functional known as the $\delta$-entropy, which was introduced in \cite{TC2013EPJC} (however, it is non-composable for $\delta\neq1$).

The ``stretched'' case  $0<\beta<1$ can be treated in a similar way.
\subsection{A rapidly growing state space}
Let us consider the growth function
\beq \label{eq:supsup}
\mathcal{W}(N) = \exp\big(k_2 \exp(k_3 N)-k_1\big) \ ,
\eeq
with $k_{1},k_{2},k_{3}\in\mathbb{R}^{+}$. A new group entropy, extensive over the class of systems characterized by the growth rate \eqref{eq:supsup}, is
\[
Z_{\alpha,k_1}(p):=k_1\ln \bigg(\frac{\ln\big(\sum_{i}p_{i}^{\alpha}\big)}{k_1 (1-\alpha)}+1 \bigg) \ .
\]
Its multivariate extension is straightforward. According to eq. \eqref{phi_W}, the composition law associated  reads
\beq
\Phi(x,y)= k_1 \ln\big(e^{\frac{x}{k_1}} + e^{\frac{y}{k_1}}-1\big)  \ .
\eeq

\section{A new, general entropy for the super-exponential class}\label{S6}

In this section, we shall introduce an entropy designed to be extensive in different super-exponential regimes.
In this context, we are naturally led to the classical functional equation
\beq
y e^y = x \ .
\eeq
For $x\geq 0$, it admits the real solution $y=L(x)$; we denote here by $L(x)$ the branch  of the real $W$-Lambert function commonly denoted by $W_{0}(x)$. \footnote{This to avoid confusion with the growth function $\mathcal{W}(x)$.}

\noindent For our purposes, let us consider the state space growth rate
\beq\label{eq:super}
\mathcal{W}(N) = e^{g(N \ln N)} \ ,
\eeq
where $g$ is a strictly increasing $C^{1}(\mathbb{R}_{\geq 0})$ invertible function, taking positive values on $\mathbb{R}_{>0}$, with
\beq
g(x)\to\infty \quad \text{for}~x\to\infty.
\eeq
In the spirit of formal group theory, we also assume
\beq \label{eq:g}
g(x)=x+O(x^2) \quad \text{as} \quad x\to 0 \ .
\eeq

For  simplicity of notation, let us denote by $\gamma(x)$ the compositional inverse of $g$: $\gamma(g(x))=x$. We observe that $\gamma'(1)\neq 0$, coherently with the assumption \eqref{eq:assump}.

The function $g(x)$ plays the role of ``interpolating function'', allowing us to deform the growth rate of the considered state space and, consequently, to explore different regimes.

The natural choice $g(x)=\nu x$, with $\nu\in\mathbb{R}^{+}$, corresponds to the interesting case
\beq\label{eq:mu-super}
\mathcal{W}(N)=N^{\nu N} \ ,
\eeq
which was studied in \cite{JPPT2018JPA}. Indeed, a Hamiltonian model, called the \textit{pairing model}, was introduced as an example of a complex system whose state space growth rate is expressed by the function \eqref{eq:mu-super}.
The growth rate \eqref{eq:super} depends on the asymptotic behaviour of $g(x)$ when $x\to\infty$.

\noi We propose the new (univariate) entropy
\beq\label{Zgamma}
Z_{\gamma(x),\alpha}(p):= \exp\bigg[L\bigg(\gamma\bigg(\frac{\ln(\sum_{i=1}^{W}p_{i}^{\alpha})}{1-\alpha}\bigg)\bigg)\bigg] - 1 \ ,
\eeq
which can be extensive in different regimes. In order to illustrate the potential relevance of the functional \eqref{Zgamma}, let us consider the limit
\beq
l=\lim_{x\to\infty} \frac{g(x)}{x} \ .
\eeq
We can distinguish several cases.
\par
\noi a) If $g(x)$ for $x\to\infty$ grows faster than any linear function  (i.e. $l=\infty$), the corresponding entropy \eqref{Zgamma} is extensive in  \textit{super-factorial} regimes. Clearly, these regimes can be further discriminated by specific choices of $g(x)$ (and consequently of $\gamma(x)$).
\\
b) If $g(x)$ is a linear function, with $l=\nu\in\mathbb{R}^{+}$, entropy \eqref{Zgamma}  reproduces the group entropy introduced in \cite{JPPT2018JPA}. In particular,  for $0<\nu<1$ entropy \eqref{Zgamma} is extensive in the sub-factorial regime. \\
c) If
\beq \label{eq:gexp}
g(x) = \exp(L(x))-1, \qquad \mathcal{W}(N)\sim e^{N} \ ,
\eeq
($l=0$), entropy \eqref{Zgamma} is extensive in the standard exponential regime.

\noi Thus, according to the previous discussion, the functional form \eqref{Zgamma} enables us to ``sweep out'' several universality classes in a simple and direct way.


\noi Obviously, for $\gamma(x) = L^{-1}(\ln (x +1))$, we obtain the R\'enyi entropy.

\subsection{The group-theoretical structure} In full generality, the composition law satisfied by the class of entropies \eqref{Zgamma} reads
\beq \label{eq:Phiexpgamma}
\Phi(x,y)= \exp\bigg[L\big(\phi\big((x+1)\ln(x+1),(y+1)\ln(y+1)\big) \big)\bigg]-1 \ ,
\eeq
where
\[
\phi(x,y)=g^{-1}(g(x)+g(y))
\]
is the composition law induced by the function $g(x)$. Therefore, eq. \eqref{eq:Phiexpgamma} represents a family of group-theoretical structures, depending on the choice of $g(x)$
.
We expect that the class of entropies \eqref{Zgamma}, as well as its multivariate version
\beq \label{Zgammamult}
Z_{\gamma(x),\alpha_1,\ldots,\alpha_n}(p):= \exp\bigg[L\bigg(\gamma\bigg(\frac{ \ln(\sum_{i_{1}=1}^{W}p_{i_{1}}^{\alpha_1})}{1-\alpha_1}+\ldots + \frac{\ln(\sum_{i_{n}=1}^{W}p_{i_{n}}^{\alpha_n})}{1-\alpha_n}\bigg)\bigg)\bigg] - 1 \
\eeq
might be useful in applications to the theory of complex systems, data analysis as well as in the study of random processes. These issues will be considered elsewhere.

\section{Formal Rings, Functional Equations and Discrete Systems}\label{S7}
Group entropies by construction are intimately related with the theory of functional equations \cite{Aczel2006book}. Indeed, according to the composability axiom, each group entropy satisfies a specific functional equation, expressing its own composition law. 

In this section, inspired by the notion of formal rings, we shall propose a Lemma that establishes, in the one-dimensional case, a more general connection among entropies and functional equations. Also, we shall introduce 
families of difference equations, which admit exact solutions constructed, once again, by means of the algebraic structure associated with group entropies.

The following simple result is at the basis of this correspondence.

\begin{lemma}\label{lemma}
Let  $\{\Phi(x,y), \Psi(x,y)\}$ be two compatible composition laws of the form
\[
\Phi(x,y)=G(G^{-1}(x)+G^{-1}(y)) \ , \quad  \Psi(x,y)=G(G^{-1}(x)\cdot G^{-1}(y)) \ ,
\]
where $G(t)$  is a real-valued and invertible function, defined on $\mathbb{R}$  (with $G(t)= t+ O(t^2)$ as $t\to 0$). \\
\noi \textit{i)} The functional equation
\beq
f(x+y)=\Phi(f(x),f(y)) \label{master1}
\eeq
admits the solution given by
\[
f(x)= G(x) \ .
\]
ii) The functional equation
\beq
f(x+y)=\Psi(f(x),f(y)) \label{master2}
\eeq
admits the solution given by
\[
f(x)= G(\exp(x)) \ .
\]
\noi \textit{iii)}  Let $x,y>0$. The functional equation
\beq
f(xy)=\Phi(f(x),f(y)) \label{master3}
\eeq
admits the solution 
\[
f(x)=G(\ln (x))  \ .
\]
\noi\textit{iv)} The functional equation
\beq
f(xy)=\Psi(f(x),f(y)) \label{master4}
\eeq
admits the solution
\[
f(x)=G(x) \ .
\]
\end{lemma}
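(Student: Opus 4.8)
The plan is to verify each of the four identities by direct substitution, exploiting the defining structure $\Phi(x,y)=G(G^{-1}(x)+G^{-1}(y))$ and $\Psi(x,y)=G(G^{-1}(x)\cdot G^{-1}(y))$, together with the fact that $G$ is invertible on $\mathbb{R}$ (so that $G^{-1}\circ G=\mathrm{id}$ wherever composed). The normalization $G(t)=t+O(t^2)$ guarantees $G$ is locally invertible at the origin and fixes $G(0)=0$; it plays no role in the algebra beyond ensuring the compositional inverse exists and is itself a well-defined formal/real series.

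First I would treat \textit{i)}: set $f(x)=G(x)$. Then $\Phi(f(x),f(y))=G\big(G^{-1}(G(x))+G^{-1}(G(y))\big)=G(x+y)=f(x+y)$, which is exactly \eqref{master1}. For \textit{iv)}, again take $f(x)=G(x)$ but now feed it into $\Psi$: $\Psi(f(x),f(y))=G\big(G^{-1}(G(x))\cdot G^{-1}(G(y))\big)=G(xy)=f(xy)$, giving \eqref{master4}. These two are essentially the same one-line computation with $+$ replaced by $\cdot$.

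Next, for \textit{ii)} I would put $f(x)=G(\exp(x))$ and compute $\Psi(f(x),f(y))=G\big(G^{-1}(G(e^x))\cdot G^{-1}(G(e^y))\big)=G(e^x e^y)=G(e^{x+y})=f(x+y)$, using the functional equation of the exponential $e^x e^y=e^{x+y}$. Symmetrically, for \textit{iii)} with $x,y>0$ set $f(x)=G(\ln x)$; then $\Phi(f(x),f(y))=G\big(\ln x+\ln y\big)=G(\ln(xy))=f(xy)$, which requires the positivity restriction precisely so that $\ln x$, $\ln y$ and $\ln(xy)$ are all defined. In each of the four cases the verification reduces to (a) cancelling $G^{-1}\circ G$ on the inner arguments, and (b) invoking the elementary additive/multiplicative identity for $\mathrm{id}$, $\exp$, or $\ln$ that converts the outer operation ($+$ or $\cdot$ on the left-hand side) into the matching inner operation inside $G$.

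There is essentially no obstacle here: the lemma is a formal bookkeeping statement, and the only point requiring a word of care is the domain issue in \textit{iii)} (hence the explicit hypothesis $x,y>0$) and, implicitly, that in \textit{ii)} the argument $e^{x+y}$ lies in the domain of $G$, which is guaranteed since $G$ is assumed defined on all of $\mathbb{R}$. I would also remark that the compatibility of $\Phi$ and $\Psi$ (i.e.\ that $(R,\Phi,\Psi)$ is a formal ring in the sense of Definition \ref{def:formalring}) is not actually used in the proof of the lemma itself — it is the ambient hypothesis that makes the pair of equations \eqref{master1}–\eqref{master4} a natural package — so the proof is uniform and elementary.
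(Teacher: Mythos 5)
Your proof is correct and follows essentially the same route as the paper: direct substitution into $\Phi$ or $\Psi$, cancelling $G^{-1}\circ G$, and using the additive/multiplicative identities of $\mathrm{id}$, $\exp$, $\ln$ (the paper writes out only cases \textit{ii)} and \textit{iii)}, declaring the others immediate, whereas you spell out all four). Your side remarks on the domain condition in \textit{iii)} and on the compatibility of $\Phi$ and $\Psi$ not being used are accurate.
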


\vspace{3mm}

\begin{proof} We shall treat explicitly the cases \textit{ii)} and \textit{iii)} only. The proof of the other cases is immediate.
\\
\\
\textit{ii)} Equation \eqref{master2} can be written in the form
\[
f(x+y)=\Psi(f(x),f(y))=G(G^{-1}(f(x))\cdot G^{-1}(f(y))) \ .
\]
Therefore, for $f(x):=G(\exp(x))$, we have the identity
\[
f(x+y)= G(G^{-1}(G(\exp(x)))\cdot G^{-1}(G(\exp(y))))=G(\exp(x+y)) \ .
\]
\noi \textit{iii)} We have
\bea
\nn \Phi(f(x),f(y))&=&G\left(G^{-1}(f(x))+G^{-1}(f(y))\right)\\ &=&G\left(G^{-1}(G(\ln x))+G^{-1}(G(\ln y))\right) \\
\nn &=&G(\ln x+\ln y)= f(xy) \ .
\eea
\end{proof}
Hereafter, we shall propose  several examples of application of the previous results to the study of certain families of functional equations; all quantities are assumed to be real. The trivial case is represented by the standard Cauchy functional equation
\beq
f(x+y)=f(x)+f(y) \ ,
\eeq
which corresponds to the additive group law \eqref{additive} and to the choice $G(t)=t$. We have immediately the linear solution $f(x)= a~x$, $a\in\mathbb{R}$.
\subsection{The formal ring structure associated with Tsallis entropy}
The formal ring structure given by eqs. \eqref{multiplicative} and \eqref{Tproduct} allows us to define the functional equations
\beq \label{F1T}
f(x+y)=f(x)+f(y)+\sigma f(x)f(y) \ ,
\eeq
\beq \label{F2T}
f(x\cdot y)=  \frac{\exp\big((1/\sigma)\ln(1+\sigma f(x))\ln(1+\sigma
    f(y))\big)-1}\sigma \ ,
\eeq
with $\sigma\in \mathbb{R}$. The ring structure is generated by the group exponential $G(t)= \frac{\text{exp}(\sigma t)-1 }{\sigma}$, which provides an exact solution to eqs. \eqref{F1T}, \eqref{F2T} according to Lemma \ref{lemma} .

\subsection{A rational group law}
Consider the formal group law
\beq
\Phi(x,y) = \frac{x+y+a xy}{1+b xy} \label{rationlaw} \ .
\eeq
Notice that when $a=b=0$, we obtain the standard additive law \eqref{additive}; for $b=0$, we are led to the multiplicative law \eqref{multiplicative}.  We mention that a one-parametric reduction of the formal group \eqref{rationlaw}, i.e
\beq
\Phi(x,y) = \frac{x+y+(\alpha-1) xy}{1+\alpha xy} \label{philaw}
\eeq
plays an important role in algebraic topology. Precisely, for $\alpha=-1,0,1$, we obtain the group laws associated with the Euler characteristic, the Todd genus and the Hirzebruch $L$-genus \cite{BMN}, respectively. The functional equation
\beq
f(x+y)= \frac{f(x) +f(y)+ a f(x) f(y)}{1+b f(x)f(y)}
\eeq
with $a,b>0$ admits the solution
\beq
f(x)=\frac{2 (e^{r x}-1)}{-a(e^{r x}-1)+\sqrt{a^2+4b}(e^{r x}+1) }, \quad x, r\in \mathbb{R} \label{Grational} \ .
\eeq
This solution coincides with the ``group logarithm'' $G(x)$ introduced in \cite{CTT2016AOP}, where a generalized, bi-parametric Tsallis entropy has been proposed.
In turn, the multiplicative equation
\[
f(xy)= \frac{f(x) +f(y)+ a f(x)f(y)}{1+b f(x)f(y)}
\]
admits the solution
\[
f(x)= \frac{2 (x^{r}-1)}{-a(x^{r}-1) + \sqrt{a^2+4b}\hspace{1mm}(x^{r}+1) }, \qquad x>0,\quad r\in \mathbb{R} \ . \label{genlogplus}
\]
The formal product $\Psi(x,y)$ associated with the group law \eqref{philaw} has been computed in \cite{CT2019prep}. Thus, a similar analysis can be extended to the other functional equations related with this ring structure.

In \cite{BK1991SBO}, interesting examples of functional equations arising from  generalized cohomology theories have been discussed  in the context of formal group theory.
\subsection{Discrete Systems and Sequences of Integer Numbers}
A direct connection between group entropies and discrete dynamical systems can also be established. To this aim, we shall consider a regular lattice of points parametrized by $n\in\mathbb{Z}$.

Given a formal ring structure $\mathcal{A}=\{\mathbb{R}, \Phi(x,y), \Psi(x,y)\}$, we can introduce the set of discrete equations
\begin{eqnarray}
z_{n+m} &=& \Phi\left(z_n, z_m\right) \hspace{30mm} \text{(DE1)} \\ \label{diffeq1}
z_{n+m}&=&\Psi\left(z_n, z_m\right)\hspace{30mm} \text{(DE2)} \\ \label{diffeq2}
z_{n\cdot m}&=& \Phi\left(z_n, z_m\right)  \hspace{30mm}  \text{(DE3)} \label{diffeq3} \\
z_{n\cdot m}&=&\Psi\left(z_n, z_m\right)  \hspace{30mm}  \text{(DE4)}\label{diffeq4}
\end{eqnarray}
obtained from the functional equations \eqref{master1}-\eqref{master4} by means of the correspondence $z_n:= f(n), \hspace{1mm} n\in\mathbb{Z} \ .$ 
Exact solutions of eqs. (DE1)-(DE4) can be constructed by applying Lemma \ref{lemma}.
As a specific example, let us consider the realization of (DE1) given by the equation
\[
z_{n+m}=z_n+z_m+ p~z_n z_m, \qquad n,m \in\mathbb{Z}, \quad p\in\mathbb{R} \ .
\]
The group-theoretical solution previously obtained reads now
\[
z_n=\frac{\exp (p n)-1}{p} \ .
\]
The related form \eqref{diffeq3}
\beq \label{eq:Tsallisrec}
w_{n\cdot m}=w_n+w_m+ p~w_n w_m , \qquad p\in\mathbb{R} \ ,
\eeq
admits the solution $w_n= G\big(\ln (z_n)\big)= \frac{n^{p}-1}{p}$.
\\
\noi We mention that for $p\in\mathbb{N}\backslash\{0\}$, we can easily generate sequences of integers of some interest. Indeed, by means of a re-scaling, we obtain that the function
$q_n:= {n^{p}-1}$ satisfies the recurrence
\beq \label{eq:qseq}
q_{n\cdot m}=q_n+q_m+ q_n q_m, \qquad n, m\in\mathbb{Z} \ .
\eeq
For instance, for $p=3$, we generate the sequence of integers
\begin{eqnarray*}
&& \ldots, -513, -344, -217, -126, -65, -28, -9, -2, -1, 0,\\ && 7, 26,  63, 124, 215, 342, 511, \ldots
\end{eqnarray*}
or, for $p=5$, the sequence of integers
\begin{eqnarray*}
&\dots,& -32769, -16808, -7777, -3126, -1025, -244, -33,\\ && -2, -1, 0, 31, 242,
\nn 1023, 3124,~7775,~16806,~32767, \ldots
\end{eqnarray*}
all of them satisfying recurrence \eqref{eq:qseq}.

\section*{Data Availability Statement}
The data that supports the findings of this study are available within the article.

\section*{Acknowledgement}
P. T. wishes to thank J. M. Amig\'o, M. A. Rodr\'iguez and G. Sicuro for useful discussions. This work has been partly supported by the research project
PGC2018-094898-B-I00, Ministerio de Ciencia, Innovaci\'on y Universidades, Spain, and by the ICMAT Severo Ochoa project
SEV-2015-0554, Ministerio de Ciencia, Innovaci\'on y Universidades. P.T. is member of the Gruppo Nazionale di Fisica Matematica (INDAM), Italy.

\end{document}